\newtheorem{theorem}{Theorem}
\newtheorem{lemma}[theorem]{Lemma}
\newtheorem{example}{Example}
\begin{document}

\title{The 2-Adic Complexity of Two Classes of Binary Sequences with Interleaved Structure \thanks{The work was supported by the National Natural Science Foundation of China (NSFC) under Grant 11701553.}}

\author{Shiyuan Qiang \thanks{Shiyuan Qiang is with the department of Applied Mathematics, China Agricultural, university, Beijing 100083, China (Email: qsycau\_18@163.com).}
Xiaoyan Jing  \thanks{Xiaoyan Jing is the Research Center for Number Theory and Its Applications, Northwest University, Xi'an 710127, China (Email: jxymg@126.com).}
Minghui Yang\thanks{Minghui Yang is with the State Key Laboratory of Information Security, Institute of Information Engineering, Chinese Academy of Sciences, Beijing 100093, China (Email: yangminghui6688@163.com).}
}

\date{\today}
\maketitle

\begin{abstract}
 The autocorrelation values of two classes of binary sequences are shown to be good in \cite{T}. We study the 2-adic complexity of these sequences. Our results show that the 2-adic complexity of such sequences is large enough to resist the attack of the rational approximation algorithm.
\end{abstract}

\begin{IEEEkeywords}
2-adic complexity, binary sequences, interleaved structure, autocorrelation values
\end{IEEEkeywords}

\section{Introduction}\label{sec-intro}

Pseudo-random sequences with low autocorrelation values, large linear complexity and so on have wide applications in cryptography and communication system. Due to the rational approximation algorithm \cite{K}, the 2-adic complexity has become an important pseudo-randomness index of binary sequences. Using sequences with low 2-adic complexity may pose a risk in cryptography and communication system. Therefore, it is meaningful to study the 2-adic complexity of binary sequences with large period, low autocorrelation values, etc. The 2-adic complexity of binary sequences has been determined in several papers, see \cite{H}, \cite{S1}, \cite{S2}, [7-9].

For a binary sequence $s=(s_0, s_1, \ldots, s_{N-1})$ with period $N$, the autocorrelation function is defined by
$$C_s(\tau)=\sum_{i=0}^{N-1}(-1)^{s_i+s_{i+\tau}}, \ \ \ \tau\in\mathbb{Z}/N\mathbb{Z}.$$

Interleaved operator introduced by Gong \cite{G} is a powerful tool to construct sequences with low autocorrelation and large period. Let ${s^0, s^1,\ldots, s^{T-1}}$ be $T$ sequences with period $N$, where $s^i=(s^{(i)}_0, s^{(i)}_1,\ldots, s^{(i)}_{N-1})\\(0\leq i< T).$ Construct the matrix
\begin{equation*}U=
\left(
\begin{array}{cccc}
s_0^{(0)} & s_0^{(1)} & \cdots & s_0^{(T-1)}\\
s_1^{(0)} & s_1^{(1)} & \cdots & s_1^{(T-1)}\\
\vdots & \vdots &\ddots &  \vdots\\
s_{N-1}^{(0)} & s_{N-1}^{(1)}& \cdots & s_{N-1}^{(T-1)}
\end{array}
\right)
\end{equation*}
by placing the sequence $s^i$ on the $i$th column and concatenate the successive rows of the matrix $U$. The interleaved sequence $u=(u_{i T+j})=(U_{i, j}) (0 \leq i <N, 0 \leq j <T$) with period $NT$ is denoted by $u=I(s^0, s^1,\ldots, s^{T-1})$ for simplicity.

\section{Preliminaries}\label{sec2}

The following sequences are shown to have low autocorrelation values.

\begin{lemma}\cite{T}\label{lem1}
Let $s_A=I(\mathbf{1}_{2^k-1},a_1,\cdots,a_{2^k})$ be the modified generalized GMW sequences of period $N_A=2^{2k}-1$, where $\mathbf{1}_{2^k-1}$ is the all one sequence of period $2^k-1$ and $a_i(1\leq i\leq 2^k)$ is some shift of a 2-level autocorrelation sequence $a$ of period $2^k-1$. Then the autocorrelation value of $s_A$ is given by
\begin{align*}
C_{s_A}=\left\{ \begin{array}{ll}
2^{2k}-1, & \textrm{if $\tau=0$}\\
-1,& \textrm{if $\tau\equiv0\pmod{2^k+1}$ and $\tau\neq0$}\\
3,& \textrm{otherwise}.
\end{array} \right.
\end{align*}
\end{lemma}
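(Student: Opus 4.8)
The plan is to exploit the interleaved structure of $s_A$ directly via the standard column-decomposition of the autocorrelation of an interleaved sequence. Writing $T=2^k+1$ for the number of columns and $N=2^k-1$ for the common period of the component sequences (so that $N_A=NT=2^{2k}-1$), I label column $0$ by the all-one sequence $\mathbf{1}_{2^k-1}$ and columns $1,\dots,2^k$ by the shifts $a_1,\dots,a_{2^k}$ of the base $2$-level autocorrelation sequence $a$. For a shift $\tau$ I write $\tau=T\tau_1+\tau_2$ with $0\le\tau_2<T$, and using $u_{iT+j}=s^{j}_i$ I would decompose
\begin{equation*}
C_{s_A}(\tau)=\sum_{j=0}^{T-1-\tau_2}C_{s^{j},s^{j+\tau_2}}(\tau_1)+\sum_{j=T-\tau_2}^{T-1}C_{s^{j},s^{j+\tau_2-T}}(\tau_1+1),
\end{equation*}
where $C_{x,y}(\sigma)=\sum_i(-1)^{x_i+y_{i+\sigma}}$ is the (cross-)correlation of two period-$N$ sequences. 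The key point is that as $j$ ranges over $\{0,\dots,T-1\}$ the column pairs $\{j,(j+\tau_2)\bmod T\}$ realize the permutation $j\mapsto j+\tau_2$ of $\mathbb{Z}/T\mathbb{Z}$, so each of the $T$ summands is a correlation between two specified columns.

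With this in hand the case analysis is natural. For $\tau=0$ the sum is trivially $N_A$. For $\tau\ne0$ with $\tau\equiv0\pmod{2^k+1}$, i.e.\ $\tau_2=0$ and $\tau_1\not\equiv0$, every summand is a self-correlation $C_{s^j}(\tau_1)$: the all-one column contributes $C_{\mathbf{1}}(\tau_1)=N=2^k-1$, while each of the $2^k$ shifts of $a$ contributes $C_a(\tau_1)=-1$ by the $2$-level autocorrelation property, giving $(2^k-1)+2^k(-1)=-1$, as claimed.

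The substantive case is $\tau_2\ne0$ (the ``otherwise'' branch). Here column $0$ occurs exactly twice among the correlation pairs --- once as a first index ($j=0$) and once as a second index ($j=T-\tau_2$) --- and each such mixed correlation between $\mathbf{1}$ and a shift of $a$ equals $-\sum_i(-1)^{a_i}=1$, since $a$ is balanced with $\sum_i(-1)^{a_i}=-1$. These two terms contribute $2$. The remaining $T-2=2^k-1$ summands are correlations between two shifts of $a$, and by the $2$-level property each equals $2^k-1$ when the combined shift is $\equiv0\pmod{2^k-1}$ and $-1$ otherwise. Thus $C_{s_A}(\tau)=2+\big[(2^k-1)m-(2^k-1-m)\big]$, where $m$ is the number of ``aligned'' pairs, and this equals $3$ precisely when $m=1$.

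Consequently the whole statement reduces to the combinatorial claim that, for every $\tau_1$ and every $\tau_2\ne0$, exactly one of the $2^k-1$ correlations between $a$-columns is aligned; equivalently, the shift sequence $(e_1,\dots,e_{2^k})$ defining $a_j=L^{e_j}(a)$ has the difference-balance property that $\tau_1+e_{j+\tau_2}-e_j\equiv0\pmod{2^k-1}$ has a unique solution $j$. I expect this difference-balance verification to be the main obstacle: it is exactly where the specific trace/$\gf(2^{2k})$ construction of the modified generalized GMW sequence must be used, and the desired autocorrelation spectrum is essentially equivalent to this property of the shift sequence. The remaining inputs --- balancedness of $a$ and its $2$-level autocorrelation --- are routine once the decomposition is in place.
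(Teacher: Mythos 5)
You should first note that the paper contains no proof of this lemma at all: it is quoted verbatim from Tang and Gong \cite{T}, so there is no internal argument to compare against. Judged on its own terms, your decomposition of $C_{s_A}(\tau)$ along the interleaved structure is correct, and your treatment of the two easy cases is complete: for $\tau=0$ the value is trivially $N_A$, and for $\tau=T\tau_1$ with $T=2^k+1$, $1\le\tau_1\le 2^k-2$, the sum of the column self-correlations is $(2^k-1)+2^k\cdot(-1)=-1$, using balancedness and the 2-level property of $a$. Your accounting in the case $\tau_2\neq 0$ is also right: the two mixed pairs involving the all-one column contribute $+1$ each, and the value equals $3$ exactly when the number $m$ of aligned pairs of $a$-columns is $1$.

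The genuine gap is that this is where your argument stops: the statement ``$m=1$ for every $\tau_1$ and every $\tau_2\neq 0$'' is not a routine verification you have postponed, it \emph{is} the lemma. Your reduction shows the claimed spectrum is equivalent to the difference-balance property of the shift sequence $(e_1,\dots,e_{2^k})$, and that property is not derivable from the hypotheses as you (and the paper) state them. Indeed, with ``some shift'' read literally the lemma is false: take all $a_i=a$ (all $e_j$ equal); then for $\tau_2\neq 0$ and $\tau_1=0$ every non-wrapped pair of $a$-columns is aligned, so $m$ is on the order of $2^k$ and $C_{s_A}(\tau)$ is on the order of $2^{2k}$, not $3$. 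So the proof must invoke the explicit trace/$\gf(2^{2k})$ construction of the (modified) generalized GMW sequence in \cite{T}, where the shift sequence inherits difference balance from the 2-level autocorrelation of the underlying period-$(2^{2k}-1)$ sequence; without that input, the ``otherwise'' branch --- the only substantive branch --- remains unproven. Your proposal is best described as a correct reduction of the lemma to its combinatorial core, not a proof of it.
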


\begin{lemma}\cite{T}\label{lem2}
Let $p$ and $p+2$ be two primes. The modified two-prime sequence is defined by
\begin{align*}
s_{B}(i)=\left\{ \begin{array}{ll}
1, & \textrm{if $i\equiv0\pmod{p+2}$}\\
1,& \textrm{if $i\equiv0\pmod{p}$ and $i\neq0$}\\
\frac{1-(\frac{i}{p})(\frac{i}{p+2})}{2},& \textrm{otherwise}
\end{array} \right.
\end{align*}
where $(\frac{\cdot}{\cdot})$ denotes the Legendre symbol. Then the autocorrelation value of $s_B$ is given by
\begin{align*}
C_{s_B}=\left\{ \begin{array}{ll}
p(p+2), & \textrm{if $\tau=0$}\\
-1,& \textrm{if $\tau\equiv0\pmod{p+2}$ and $\tau\neq0$}\\
3,& \textrm{otherwise}
\end{array} \right.
\end{align*}
\end{lemma}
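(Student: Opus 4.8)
The plan is to move everything into the $\pm1$ domain and reduce the autocorrelation to classical Legendre character sums. Set $q=p+2$ and $N=pq$, and write $e(i)=(-1)^{s_B(i)}$. Reading off the definition, all three branches that produce the value $1$ for $s_B$ are exactly the positions with $p\mid i$ or $q\mid i$ (including $i=0$, which falls in the first branch), so $e(i)=-1$ there; on the remaining positions $\gcd(i,N)=1$ one has $e(i)=(-1)^{(1-(\frac{i}{p})(\frac{i}{q}))/2}=\left(\frac{i}{p}\right)\left(\frac{i}{q}\right)$. Introducing the Jacobi symbol $\chi(i):=\left(\frac{i}{p}\right)\left(\frac{i}{q}\right)$ with the usual convention $\chi(i)=0$ when $\gcd(i,N)>1$, and letting $w(i)$ be the indicator of $\{\,p\mid i\text{ or }q\mid i\,\}$, the first step is to record the single uniform identity $e(i)=\chi(i)-w(i)$. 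This is the crux: it absorbs the two special branches into one clean additive correction.

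Substituting this into the definition of $C_{s_B}(\tau)$ and expanding, I would obtain
\[
C_{s_B}(\tau)=\sum_{i=0}^{N-1}\chi(i)\chi(i+\tau)-\sum_{i}\chi(i)w(i+\tau)-\sum_{i}w(i)\chi(i+\tau)+\sum_{i}w(i)w(i+\tau).
\]
The leading sum I evaluate through the Chinese Remainder Theorem: identifying $i$ with $(i\bmod p,\,i\bmod q)$ factors it into a product of two classical Legendre autocorrelations $\sum_{a}\left(\frac{a}{p}\right)\left(\frac{a+\tau}{p}\right)$, each equal to $p-1$ (resp.\ $q-1$) when the shift is $\equiv 0$ modulo $p$ (resp.\ $q$) and equal to $-1$ otherwise. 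For the two cross sums, fixing the residue of $i$ modulo whichever prime divides $i+\tau$ leaves a free sum of a nontrivial Legendre symbol over a complete residue system, which vanishes; only the $i+\tau\equiv 0\pmod N$ correction survives, giving $-\chi(-\tau)$ and $-\chi(\tau)$ respectively.

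The point that makes the two cross terms cancel is precisely the twin-prime hypothesis $q=p+2$. Indeed $\left(\frac{-1}{p}\right)\left(\frac{-1}{q}\right)=(-1)^{(p-1)/2+(q-1)/2}=(-1)^{p}=-1$ since $p$ is odd, so $\chi(-\tau)=-\chi(\tau)$ and the middle two sums add to zero. Thus $C_{s_B}(\tau)$ collapses to the CRT main sum plus the purely combinatorial count $\sum_i w(i)w(i+\tau)$, i.e.\ the number of $i$ with $\{p\mid i\text{ or }q\mid i\}$ and $\{p\mid i+\tau\text{ or }q\mid i+\tau\}$.

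Finally I would close with a short case analysis on $\gcd(\tau,N)$: when $p\nmid\tau$ and $q\nmid\tau$ the main sum is $(-1)(-1)=1$ and the count is $2$ (the two CRT solutions), giving $3$; when $q\mid\tau$ but $p\nmid\tau$ the main sum is $-(q-1)$ and the count is $p$, giving $-1$; when $p\mid\tau$ but $q\nmid\tau$ the main sum is $-(p-1)$ and the count is $q=p+2$, giving $3$; and $\tau=0$ trivially gives $\sum_i e(i)^2=N$. I expect the main obstacle to be twofold: verifying the cross-term cancellation, which is exactly where the relation $q=p+2$ is used rather than merely that $p,q$ are odd primes, and keeping the inclusion–exclusion for the two divisibility conditions organized correctly so that the combinatorial count comes out to $2$, $p$, or $q$ in the respective regimes.
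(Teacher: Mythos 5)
Your proof is correct, but note that there is nothing in the paper to compare it with: Lemma~\ref{lem2} is imported from Tang and Gong \cite{T}, and this paper only cites it without proof, so your argument is necessarily an independent derivation. Judged on its own terms it holds up. The key identity $e(i)=\chi(i)-w(i)$ is right (one slip of phrasing: it is the first \emph{two} branches, not ``all three,'' that correspond to $p\mid i$ or $q\mid i$; the third branch also outputs $1$ when $\chi(i)=-1$, which your relation $e(i)=\chi(i)$ on those positions already encodes correctly). The CRT factorization of $\sum_i\chi(i)\chi(i+\tau)$ into two classical Legendre autocorrelations is valid; the cross sums do reduce to $-\chi(-\tau)$ and $-\chi(\tau)$, since fixing $i$ modulo one prime leaves a complete (hence vanishing) character sum modulo the other, with only the $i\equiv-\tau\pmod{pq}$ term surviving; and the twin-prime hypothesis enters exactly where you locate it, through $\left(\frac{-1}{p}\right)\left(\frac{-1}{p+2}\right)=(-1)^{(p-1)/2+(p+1)/2}=(-1)^{p}=-1$, so $\chi(-\tau)=-\chi(\tau)$ and the middle terms cancel. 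Your counts $W(\tau)=2,\ p,\ p+2$ in the three regimes also check out --- when $q\mid\tau$ and $\tau\neq 0$ the surviving set is exactly $\{i:q\mid i\}$ of size $p$, and symmetrically when $p\mid\tau$ --- giving $1+2=3$, $-(p+1)+p=-1$, and $-(p-1)+(p+2)=3$, in agreement with the lemma. For context, the original source \cite{T} obtains this autocorrelation within its interleaving and generalized-cyclotomy framework, which is the machinery the rest of the present paper is built around; your direct Jacobi-symbol computation is a more elementary, self-contained alternative, at the cost of being tailored to this particular construction rather than fitting that general framework.
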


Let $s=(s_0, s_1, \ldots, s_{N-1})$ be a binary sequence with period $N$. Denote $S(x)=s_0+s_1x+\cdots+s_{N-1}x^{N-1}$.
Then the 2-adic complexity $\Phi_2(s)$ \cite{K}
of the binary sequence $s$ is defined by $\log_2\frac{2^N-1}{\gcd(2^N-1, S(2))}$, where $\gcd(a,b)$ denotes the greatest common divisor of $a$ and $b$.

\section{Main result}

In this section we will investigate the 2-adic complexity of $s_A$ and $s_B$. Firstly, we determine the lower bound of the 2-adic complexity of $s_A$. Then we determine the exact value of the 2-adic
complexity of $s_B$.

The following lemma is useful in the sequel.

\begin{lemma}\label{lem3}
Let $s=(s_0,s_1,\cdots,s_{N-1})$ be a binary sequence of period N, $S(x)=\sum_{i=0}^{N-1}s_ix^i\in  \mathbb{Z} [x] $ and $T(x)=\sum_{i=0}^{N-1}(-1)^{s_i}x^i\in \mathbb{Z}[x]$. Then
\begin{align*}
-2S(2)T(2^{-1})\equiv N+\sum_{\tau=1}^{N-1}C_s(\tau)2^{\tau}\pmod{2^N-1}
\end{align*}
\end{lemma}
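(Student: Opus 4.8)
The plan is to reduce everything modulo $2^N-1$ and exploit two facts: that $2^N\equiv 1\pmod{2^N-1}$, and that $2$ is invertible modulo the odd number $2^N-1$, so that $2^{-1}$ (and hence $T(2^{-1})$) is a well-defined element of $\mathbb{Z}/(2^N-1)\mathbb{Z}$. The starting point is the elementary identity $(-1)^{s_i}=1-2s_i$, which relates the two generating polynomials via
\begin{align*}
T(x)=\sum_{i=0}^{N-1}(1-2s_i)x^i=\sum_{i=0}^{N-1}x^i-2S(x).
\end{align*}
Evaluating at $x=2$ and using $\sum_{i=0}^{N-1}2^i=2^N-1\equiv 0\pmod{2^N-1}$, I obtain the key first reduction $-2S(2)\equiv T(2)\pmod{2^N-1}$. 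Multiplying through by $T(2^{-1})$ then rewrites the left-hand side of the claim as
\begin{align*}
-2S(2)T(2^{-1})\equiv T(2)T(2^{-1})\pmod{2^N-1}.
\end{align*}

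Next I would expand the product on the right as a double sum,
\begin{align*}
T(2)T(2^{-1})=\sum_{i=0}^{N-1}\sum_{j=0}^{N-1}(-1)^{s_i+s_j}2^{\,i-j}.
\end{align*}
Because $2^N\equiv 1\pmod{2^N-1}$, the power $2^{\,i-j}$ depends only on $i-j$ modulo $N$, so I may group the terms according to $\tau\equiv i-j\pmod N$. For a fixed $\tau$, the $N$-periodicity of $s$ lets me set $i=(j+\tau)\bmod N$, so the inner sum becomes $\sum_{j=0}^{N-1}(-1)^{s_{j+\tau}+s_j}$, which is exactly $C_s(\tau)$ by the definition in the introduction. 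Hence
\begin{align*}
T(2)T(2^{-1})\equiv\sum_{\tau=0}^{N-1}C_s(\tau)2^{\tau}\pmod{2^N-1}.
\end{align*}
Separating the term $\tau=0$ and noting $C_s(0)=\sum_{j=0}^{N-1}(-1)^{2s_j}=N$ then produces $N+\sum_{\tau=1}^{N-1}C_s(\tau)2^{\tau}$, which is the assertion.

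The computation is short, and the only point needing genuine care is the bookkeeping of the exponents: the differences $i-j$ range over $-(N-1),\dots,N-1$, so I must justify reducing them modulo $N$ (legitimate because $2^N\equiv 1$ and $2^{-1}\equiv 2^{N-1}$ modulo $2^N-1$) and confirm that after this reduction each residue $\tau$ is hit by exactly one $i$ for each $j$. I expect this reindexing, together with the verification that $T(2^{-1})$ is well defined in $\mathbb{Z}/(2^N-1)\mathbb{Z}$, to be the main (if mild) obstacle; once the grouping is set up correctly, recognizing the inner sums as the autocorrelation values $C_s(\tau)$ is immediate.
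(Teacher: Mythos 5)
Your proof is correct and complete. The paper actually states this lemma without any proof at all (it is a known identity in the 2-adic complexity literature, quoted as a tool), so there is no in-paper argument to compare against; your derivation --- using $(-1)^{s_i}=1-2s_i$ to get $T(2)\equiv -2S(2)\pmod{2^N-1}$, then expanding $T(2)T(2^{-1})$ as a double sum and grouping terms by the residue of $i-j$ modulo $N$ so that each group sums to $C_s(\tau)2^{\tau}$, with $C_s(0)=N$ --- is exactly the standard proof of this identity, and the points you flag as needing care (invertibility of $2$ modulo $2^N-1$ and the bijective reindexing of the $N^2$ pairs into $N$ residue classes) are precisely the right ones.
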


\begin{theorem}\label{th5} Let the symbols be the same as before. Then
the 2-adic complexity $\Phi_2(S_A)$ of $S_A$ with period $N_A=2^{2k}-1$ satisfies $\Phi_2(s_A)>(2^{2k}-1)-1-2(k-1)> \frac{N_A}{2} $.
\end{theorem}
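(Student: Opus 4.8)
The plan is to use Lemma~\ref{lem3} together with the autocorrelation spectrum from Lemma~\ref{lem1} to bound $d := \gcd(2^{N_A}-1, S(2))$ from above, where $S$ is the polynomial attached to $s_A$; this suffices because $\Phi_2(s_A) = \log_2\frac{2^{N_A}-1}{d}$ increases as $d$ decreases. Writing $n = 2^k+1$ so that $N_A = (2^k-1)n$, the values $C_{s_A}(\tau) = -1$ occur exactly at $\tau = jn$ with $1 \le j \le 2^k-2$, and $C_{s_A}(\tau)=3$ at every other nonzero $\tau$. Using $-1 = 3-4$ I would split $\sum_{\tau=1}^{N_A-1}C_{s_A}(\tau)2^\tau = 3\sum_{\tau=1}^{N_A-1}2^\tau - 4\sum_{j=1}^{2^k-2}2^{jn}$. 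Since $\sum_{\tau=1}^{N_A-1}2^\tau \equiv -1 \pmod{2^{N_A}-1}$ and $(2^n-1)\sum_{j=1}^{2^k-2}2^{jn} = 2^{N_A}-2^n$ telescopes, the right-hand side of Lemma~\ref{lem3} collapses modulo $2^{N_A}-1$ to $2^{2k}-4B$, where $B := \frac{2^{N_A}-1}{2^n-1} = \sum_{j=0}^{2^k-2}2^{jn}$.

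I would then transfer this to $d$. As $2$ is invertible modulo $2^{N_A}-1$, the term $T(2^{-1})$ reduces to an integer, and Lemma~\ref{lem3} gives $-2S(2)T(2^{-1}) \equiv 2^{2k}-4B \pmod{2^{N_A}-1}$. Since $d$ divides both $S(2)$ and $2^{N_A}-1$, it divides the left side, hence $d \mid 2^{2k}-4B$, so $d \mid g$ with $g := \gcd(2^{N_A}-1, 2^{2k}-4B)$. To bound $g$ I multiply by $2^n-1$: the identity $(2^n-1)(2^{2k}-4B) = (2^n-1)2^{2k} - 4(2^{N_A}-1)$ shows $g \mid (2^n-1)2^{2k}$, and oddness of $g$ forces $g \mid 2^n-1$.

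The \emph{decisive} step is the second reduction. Because $g \mid 2^n-1$ we have $2^{jn} \equiv 1 \pmod g$ for all $j$, hence $B \equiv 2^k-1 \pmod g$, and therefore $0 \equiv 2^{2k}-4B \equiv 2^{2k}-4(2^k-1) = (2^k-2)^2 \pmod g$. Thus $g \mid (2^k-2)^2 = 4(2^{k-1}-1)^2$, and since $g$ is odd, $g \mid (2^{k-1}-1)^2 < 2^{2(k-1)}$. Consequently $d \le g < 2^{2(k-1)}$, so $\Phi_2(s_A) = \log_2\frac{2^{N_A}-1}{d} > \log_2\frac{2^{N_A-1}}{2^{2(k-1)}} = (2^{2k}-1)-1-2(k-1)$, which exceeds $N_A/2$ for every $k \ge 1$. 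I expect the main obstacle to be precisely this two-stage gcd argument: one must first confine $g$ to a divisor of $2^n-1$ and only afterwards exploit that to telescope $B$ down to $2^k-1$; recognizing that the result is the perfect square $(2^k-2)^2$, so that the factor of $4$ can be stripped off against the oddness of $g$, is what tightens the bound enough to beat $N_A/2$.
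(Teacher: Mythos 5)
Your proof is correct and follows essentially the same route as the paper: reduce $-2S_A(2)T_A(2^{-1})$ modulo $2^{N_A}-1$ to $2^{2k}-4B$ with $B=\frac{2^{N_A}-1}{2^{2^k+1}-1}$ via Lemmas \ref{lem1} and \ref{lem3}, confine the relevant gcd to a divisor of $2^{2^k+1}-1$, reduce $B\equiv 2^k-1$ there to obtain $(2^k-2)^2$, and strip the factor $4$ by oddness to get the bound $(2^{k-1}-1)^2$. The only cosmetic difference is in the middle step: you show $g\mid 2^{2^k+1}-1$ by multiplying $2^{2k}-4B$ by $2^{2^k+1}-1$ and using that $g$ is odd, whereas the paper invokes $\gcd(a,mn)\le\gcd(a,m)\gcd(a,n)$ and observes that the factor at $\frac{2^{N_A}-1}{2^{2^k+1}-1}$ is trivial; these are interchangeable, and both arguments (yours and the paper's) share the same harmless degeneracy at $k=1$, where $(2^{k-1}-1)^2=0$ makes the divisibility bound vacuous.
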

\begin{proof}
From Lemmas \ref{lem1} and \ref{lem3} we have
\begin{align}
-2S_A(2)T_A(2^{-1})&\equiv N_A+\sum_{\tau=1}^{N_A-1}C_{s_A}(\tau)2^{\tau} \notag\\
&\equiv2^{2k}-1+\sum_{\tau=1}^{N_A-1}3\cdot2^{\tau}-\sum_{\tau=1}^{2^k-1-1}4\cdot2^{(2^k+1)\tau}\notag\\
                                              &\equiv2^{2k}-1+3\bigg(\frac{1-2^{N_A}}{1-2}-1\bigg)-4\bigg(\frac{1-2^{(2^k+1)(2^k-1)}}{1-2^{2^k+1}}-1\bigg)\notag\\
                                              &\equiv2^{2k}-4\frac{2^{2^{2k}-1}-1}{2^{2^k+1}-1}\pmod{2^{2^{2k}-1}-1}.\label{equ1}
\end{align}

Let $\gcd(S_A(2)T_A(2^{-1}), 2^{2^{2k}-1}-1)=d_A.$ Then by (\ref{equ1}) we get
\begin{align*}
d_A&=\gcd(2^{2k}-4\frac{2^{2^{2k}-1}-1}{2^{2^k+1}-1},2^{2^{2k}-1}-1)\\
&\leq\gcd(2^{2k}-4\frac{2^{2^{2k}-1}-1}{2^{2^k+1}-1},2^{2^{k}+1}-1)\cdot \gcd(2^{2k}-4\frac{2^{2^{2k}-1}-1}{2^{2^k+1}-1},\frac{2^{2^{2k}-1}-1}{2^{2^{k}+1}-1})\\
 &=\gcd(2^{2k}-4\frac{2^{2^{2k}-1}-1}{2^{2^k+1}-1},2^{2^{k}+1}-1)\\
 &=\gcd(2^{2k}-4\sum_{i=0}^{2^k-2}2^{(2^k+1)i},2^{2^{k}+1}-1)
\end{align*}

And from
\begin{align*}
2^{2k}-4\sum_{i=0}^{2^k-2}2^{(2^k+1)i}&\equiv2^{2k}-4(2^k-2+1)\pmod{2^{2^k+1}-1}\\
                                      &\equiv(2^k-2)^2\pmod{2^{2^k+1}-1},
\end{align*}
we know
$$d_A\leq\gcd((2^k-2)^2,2^{2^k+1}-1)=\gcd((2^{k-1}-1)^2,2^{2^k+1}-1).$$
It then follows that $\gcd(S_A(2),2^{N_A}-1)\leq d_A \leq(2^{k-1}-1)^2$.

Hence from the definition of $\Phi_2(s_A)$ we get
\begin{align*}
\Phi_2(s_A)&=\log_2\frac{2^{2^{2k}}-1}{\gcd(S_A(2),2^{N_A}-1)}\geq\log_2\frac{2^{N_A}-1}{(2^{k-1}-1)^2}\\
           &=\log_2(2^{2^{2k}-1}-1)-2\log_2(2^{k-1}-1)\\
           &>(2^{2k}-1)-1-2(k-1).
\end{align*}

Let $f(k)=(2^{2k}-1)-2[2(k-1)+1]=4^{k}-4k+1$. It is obvious that $f(k)> 0$. Therefore we get $(2^{2k}-1)-1-2(k-1)> \frac{2^{2k}-1}{2}$ which implies that $\Phi_2(s_A)> \frac{N_A}{2}$.
\end{proof}

In the following we will denote $c=p(p+2)-4\frac{2^{p(p+2)}-1}{2^{p+2}-1}+1$ and assume that $p$ and  $p+2$ are both prime without loss of generality.

\begin{lemma}\label{lem4}
$\gcd(c, 2^p-1)=1$.
\end{lemma}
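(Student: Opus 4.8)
The plan is to reduce $c$ modulo $2^p-1$ to a simple perfect square, and then compute the desired gcd directly. First I would isolate the only delicate term of $c$, namely the integer $\frac{2^{p(p+2)}-1}{2^{p+2}-1}$, which is the geometric sum $\sum_{i=0}^{p-1}2^{(p+2)i}$ (note that $p(p+2)/(p+2)=p$). Working modulo $2^p-1$, where $2^p\equiv 1$, each exponent $(p+2)i$ may be reduced modulo $p$, giving $2^{(p+2)i}\equiv 2^{2i\bmod p}\pmod{2^p-1}$. Since $p$ and $p+2$ are both prime we have $p\geq 3$ odd, so $\gcd(2,p)=1$ and the map $i\mapsto 2i\bmod p$ permutes $\{0,1,\dots,p-1\}$. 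Hence the whole sum collapses to $\sum_{j=0}^{p-1}2^{j}=2^p-1\equiv 0\pmod{2^p-1}$.

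Consequently $c\equiv p(p+2)+1=(p+1)^2\pmod{2^p-1}$, so that $\gcd(c,2^p-1)=\gcd\big((p+1)^2,\,2^p-1\big)$. Since a squared integer shares exactly the prime divisors of its base, it suffices to prove that $\gcd(p+1,2^p-1)=1$.

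For this final step I would argue by contradiction using the multiplicative order of $2$. Suppose a prime $q$ divides both $p+1$ and $2^p-1$. Then $q$ is odd, since it divides the odd number $2^p-1$, and $\ord_q(2)\mid p$. As $p$ is prime and $\ord_q(2)=1$ is impossible, we must have $\ord_q(2)=p$, whence $p\mid q-1$ and therefore $q\geq p+1$. But $q\mid p+1$ forces $q\leq p+1$, so $q=p+1$; since $p$ is odd, $p+1$ is even, contradicting that $q$ is odd. Thus no such prime exists, $\gcd(p+1,2^p-1)=1$, and the lemma follows.

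I expect the main obstacle to be the exponent bookkeeping in the first step: correctly recognizing the geometric sum, reducing its exponents modulo $p$, and justifying that $i\mapsto 2i\bmod p$ is a bijection. Once the congruence $c\equiv(p+1)^2\pmod{2^p-1}$ is in hand, the order argument is routine.
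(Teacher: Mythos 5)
Your proof is correct and takes essentially the same route as the paper: reduce $c$ modulo $2^p-1$ to $(p+1)^2$, then rule out a common prime divisor $q$ by analyzing the multiplicative order of $2$ modulo $q$ (order must be $p$, forcing $p\mid q-1$ and $q\mid p+1$, hence $q=p+1$, impossible). The only cosmetic difference is in how the term $4\frac{2^{p(p+2)}-1}{2^{p+2}-1}$ is shown to vanish modulo $2^p-1$: the paper notes $2^p-1\mid 2^{p(p+2)}-1$ together with $\gcd(2^p-1,2^{p+2}-1)=1$, while you expand the geometric sum and permute exponents via $i\mapsto 2i\bmod p$; both are valid, the paper's being slightly shorter.
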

\begin{proof}
From $2^p-1\mid2^{p(p+2)}-1$ and $\gcd(2^p-1, 2^{p+2}-1)=1$, we get $2^p-1\mid\frac{2^{p(p+2)}-1}{2^{p+2}-1}$. It then follows that $$\gcd(c, 2^p-1)=\gcd(p(p+2)+1,2^p-1)=\gcd((p+1)^2,2^p-1).$$

Assume that $q$ is a prime divisor of $\gcd(c, 2^p-1)$, then $2^p\equiv1\pmod q$ and $2^{q-1}\equiv1\pmod q$.
Therefore $q\mid p+1, p\mid q-1$.
Let $q-1=pk$, then $q=pk+1\mid p+1$, we have $k=1, q=p+1$ which is a contradiction.
\end{proof}

\begin{lemma}\label{lem5}
$\gcd(c,2^{p+2}-1)=1$.
\end{lemma}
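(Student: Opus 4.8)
The plan is to mirror the proof of Lemma~\ref{lem4}, but now reducing $c$ modulo $2^{p+2}-1$ instead of modulo $2^p-1$. First I would simplify the geometric-sum term occurring in $c$. Since $2^{p+2}\equiv1\pmod{2^{p+2}-1}$, the sum
\[
\frac{2^{p(p+2)}-1}{2^{p+2}-1}=\sum_{i=0}^{p-1}2^{(p+2)i}
\]
reduces to $p$ modulo $2^{p+2}-1$, so that
\[
c\equiv p(p+2)-4p+1\equiv(p-1)^2\pmod{2^{p+2}-1}.
\]
This immediately yields $\gcd(c,2^{p+2}-1)=\gcd((p-1)^2,2^{p+2}-1)$, reducing the problem to showing that $(p-1)^2$ and $2^{p+2}-1$ are coprime.

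Next I would argue by contradiction: suppose $q$ is a prime divisor of $\gcd((p-1)^2,2^{p+2}-1)$. Then $q\mid p-1$, and $2^{p+2}\equiv1\pmod q$, while Fermat's little theorem gives $2^{q-1}\equiv1\pmod q$. Hence $\ord_q(2)$ divides both $p+2$ and $q-1$. Because $p+2$ is prime and the case $\ord_q(2)=1$ would force $q\mid1$, we must have $\ord_q(2)=p+2$, and therefore $p+2\mid q-1$.

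The contradiction is then a direct size comparison: $q\mid p-1$ forces $q\leq p-1$, whereas $p+2\mid q-1$ forces $q\geq p+3$, which is impossible.

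I expect this proof to be slightly cleaner than that of Lemma~\ref{lem4}, because here the two divisibility constraints $q\mid p-1$ and $p+2\mid q-1$ collide directly in magnitude, without needing the auxiliary substitution $q-1=pk$ used there. The only steps requiring any care are the reduction of the geometric sum modulo $2^{p+2}-1$ and the verification that $\ord_q(2)\neq1$; both are routine.
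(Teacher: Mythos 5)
Your proof is correct and takes essentially the same route as the paper's: the same reduction of the geometric sum modulo $2^{p+2}-1$ yielding $c\equiv(p-1)^2\pmod{2^{p+2}-1}$, followed by the same argument that a common prime divisor $q$ would satisfy both $q\mid p-1$ and $p+2\mid q-1$. Your direct size comparison ($q\leq p-1$ versus $q\geq p+3$) is just a cleaner phrasing of the paper's substitution $p-1=kq$, and your explicit check that $\ord_q(2)\neq1$ fills in a step the paper leaves implicit.
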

\begin{proof}
From
\begin{align*}
c=&p(p+2)-4\frac{2^{p(p+2)}-1}{2^{p+2}-1}+1\\
\equiv&p(p+2)-4\sum_{i=0}^{p-1}2^{(p+2)i}+1\\
\equiv&p(p+2)-4p+1\pmod{2^{p+2}-1}
\end{align*}
we get $$\gcd(c,2^{p+2}-1)=\gcd(p(p+2)-4p+1,2^{p+2}-1)=\gcd((p-1)^2,2^{p+2}-1).$$

Assume that $q$ is a prime divisor of $\gcd(c,2^{p+2}-1)$, then $2^{p+2}\equiv1\pmod q$ and $2^{q-1}\equiv1\pmod q$. Therefore $q\mid p-1, p+2\mid q-1$.
Let $p-1=kq$, then $p+2=kq+3\mid q-1$ which is a contradiction. This implies that $\gcd(c,2^{p+2}-1)=1.$
\end{proof}

\begin{lemma}\label{lem6}
$\gcd(c,\frac{2^{p(p+2)}-1}{(2^p-1)(2^{p+2}-1)})=1$.
\end{lemma}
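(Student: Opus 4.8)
The plan is to reduce the whole statement to a single statement about the multiplicative order of $2$ modulo a prime divisor of the gcd. First I would record that $M:=\frac{2^{p(p+2)}-1}{(2^p-1)(2^{p+2}-1)}$ really is an integer: since $p$ is odd we have $\gcd(p,p+2)=1$, hence $\gcd(2^p-1,2^{p+2}-1)=2^{\gcd(p,p+2)}-1=1$, and both $2^p-1$ and $2^{p+2}-1$ divide $2^{p(p+2)}-1$, so their coprime product does too. In particular every prime divisor of $M$ divides $2^{p(p+2)}-1$.

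Next I would argue by contradiction: suppose a prime $q$ divides $\gcd(c,M)$ and set $d=\ord_q(2)$. From $q\mid M\mid 2^{p(p+2)}-1$ we get $d\mid p(p+2)$, and since $p$ and $p+2$ are distinct odd primes the only divisors are $d\in\{1,p,p+2,p(p+2)\}$. I would eliminate the three small values using the earlier lemmas: $d=1$ forces $q\mid 1$; $d=p$ forces $q\mid 2^p-1$, so $q\mid\gcd(c,2^p-1)=1$ by Lemma \ref{lem4}; and $d=p+2$ forces $q\mid 2^{p+2}-1$, so $q\mid\gcd(c,2^{p+2}-1)=1$ by Lemma \ref{lem5}. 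Each is absurd, so necessarily $d=p(p+2)$, which gives $q\equiv1\pmod{p(p+2)}$ and hence $q\ge p(p+2)+1=(p+1)^2$.

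The reduction of $c$ modulo $q$ is then immediate. Writing $A=\frac{2^{p(p+2)}-1}{2^{p+2}-1}=\sum_{i=0}^{p-1}2^{(p+2)i}$, we have $A\,(2^{p+2}-1)=2^{p(p+2)}-1\equiv0\pmod q$; because $d=p(p+2)\nmid p+2$ we get $q\nmid 2^{p+2}-1$, so $A\equiv0\pmod q$. Therefore $c=p(p+2)-4A+1\equiv p(p+2)+1=(p+1)^2\pmod q$. Since $q\mid c$ this yields $q\mid (p+1)^2$, hence $q\mid p+1$ and $q\le p+1$, which contradicts $q\ge(p+1)^2>p+1$. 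Concluding that no such $q$ exists gives $\gcd(c,M)=1$.

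I expect the only genuine obstacle to be pinning the order down to the full value $p(p+2)$ rather than a proper divisor of it; this is exactly the point where Lemmas \ref{lem4} and \ref{lem5} are indispensable, and it is what makes the decisive size estimate $q\ge(p+1)^2$ available. Once that is in place the computation $c\equiv(p+1)^2\pmod q$ is a one-line consequence of $d=p(p+2)$, and the final contradiction is purely a comparison of magnitudes.
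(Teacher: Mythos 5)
Your proposal is correct and follows essentially the same route as the paper: a case analysis on the order of $2$ modulo a prime divisor $q$ of the gcd, using Lemmas \ref{lem4} and \ref{lem5} to rule out orders $p$ and $p+2$, and deriving the same size contradiction ($p(p+2)\mid q-1$ versus $q\mid p+1$) in the remaining case. The only cosmetic difference is that the paper obtains $c\equiv(p+1)^2$ upfront from the divisibility $\frac{2^{p(p+2)}-1}{(2^p-1)(2^{p+2}-1)}\mid\frac{2^{p(p+2)}-1}{2^{p+2}-1}$, whereas you derive it after pinning down the order; both are valid.
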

\begin{proof}
From $\frac{2^{p(p+2)}-1}{(2^p-1)(2^{p+2}-1)}\mid\frac{2^{p(p+2)}-1}{2^{p+2}-1}$, we have $\gcd(c,\frac{2^{p(p+2)}-1}{(2^p-1)(2^{p+2}-1)})=\gcd(p(p+2)+1,\frac{2^{p(p+2)}-1}{(2^p-1)(2^{p+2}-1)})=\gcd((p+1)^2,\frac{2^{p(p+2)}-1}{(2^p-1)(2^{p+2}-1)})$.
Assume that $q$ is a prime divisor of $\gcd(c,\frac{2^{p(p+2)}-1}{(2^p-1)(2^{p+2}-1)})$, then $2^{p(p+2)}\equiv1\pmod q$ and $2^{q-1}\equiv1\pmod q$. Therefore the order of 2 mod $q$ is $p$, $p+2$ or $p(p+2)$.

If the order of 2 mod $q$ is $p$, then the result follows from Lemma 5.

If the order of 2 mod $q$ is $p+2$, then the result follows from Lemma 6.

If the order of 2 mod $q$ is $p(p+2)$, $p(p+2)\mid q-1$ which contradicts to $q\mid p+1$.
\end{proof}

\begin{theorem}\label{th9}
Let the symbols be the same as before. Then the 2-adic complexity of $s_B$ with period $N_B=p(p+2)$ is $$\Phi_2(s_B)=N_B=\log_2(2^{p(p+2)}-1).$$
\end{theorem}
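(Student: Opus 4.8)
The plan is to mirror the computation already carried out for $s_A$ in Theorem \ref{th5}, and then to collapse the entire problem onto the single arithmetic fact $\gcd(c,2^{N_B}-1)=1$, for which Lemmas \ref{lem4}, \ref{lem5} and \ref{lem6} have been prepared. First I would feed the autocorrelation spectrum of Lemma \ref{lem2} into Lemma \ref{lem3}. Exactly as in the derivation of \eqref{equ1}, the generic value $3$ contributes $3\sum_{\tau=1}^{N_B-1}2^{\tau}$, while the $p-1$ shifts with $\tau\equiv 0\pmod{p+2}$, $\tau\neq 0$, each replace a $3$ by a $-1$ and hence subtract $4\sum_{j=1}^{p-1}2^{(p+2)j}$. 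Summing the two geometric series modulo $2^{N_B}-1$ and using $2^{N_B}\equiv 1$, I expect to obtain
\begin{align*}
-2S_B(2)T_B(2^{-1})\equiv p(p+2)+1-4\frac{2^{p(p+2)}-1}{2^{p+2}-1}\equiv c\pmod{2^{N_B}-1},
\end{align*}
which is precisely the quantity $c$ fixed just before Lemma \ref{lem4}.

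Second, I would convert this congruence into a gcd statement. Since $2^{N_B}-1$ is odd, $2$ is a unit modulo $2^{N_B}-1$, so the congruence forces $\gcd\!\left(S_B(2)T_B(2^{-1}),\,2^{N_B}-1\right)$ to divide $\gcd\!\left(c,\,2^{N_B}-1\right)$. In particular $\gcd(S_B(2),2^{N_B}-1)$ divides $\gcd(c,2^{N_B}-1)$, so it suffices to prove the latter equals $1$; this immediately yields $\gcd(S_B(2),2^{N_B}-1)=1$ and therefore $\Phi_2(s_B)=\log_2(2^{N_B}-1)=N_B$ from the definition of the $2$-adic complexity.

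Third, to establish $\gcd(c,2^{N_B}-1)=1$ I would argue through the multiplicative order of $2$. Because $p$ is odd, $\gcd(2^p-1,2^{p+2}-1)=2^{\gcd(p,p+2)}-1=1$, so $(2^p-1)(2^{p+2}-1)$ divides $2^{p(p+2)}-1$ cleanly and the cofactor $\frac{2^{p(p+2)}-1}{(2^p-1)(2^{p+2}-1)}$ is an integer. Let $q$ be any prime dividing $\gcd(c,2^{N_B}-1)$. Then $\ord_q(2)$ divides $p(p+2)$, and since $p$ and $p+2$ are distinct primes the only possibilities are $1,\,p,\,p+2$ and $p(p+2)$. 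Order $1$ is impossible; order $p$ forces $q\mid 2^p-1$, contradicting Lemma \ref{lem4}; order $p+2$ forces $q\mid 2^{p+2}-1$, contradicting Lemma \ref{lem5}; and order $p(p+2)$ forces $q$ to divide $\frac{2^{p(p+2)}-1}{(2^p-1)(2^{p+2}-1)}$, since then $q$ divides neither $2^p-1$ nor $2^{p+2}-1$, contradicting Lemma \ref{lem6}. Hence no such $q$ exists.

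The geometric-series bookkeeping in the first step is routine, and the genuine arithmetic content has already been isolated into the three preparatory lemmas; so the main thing to get right here is the exhaustiveness of the case analysis on $\ord_q(2)$. The one place that truly invokes the twin-prime hypothesis is that the divisor set of $p(p+2)$ is exactly $\{1,p,p+2,p(p+2)\}$ and that Lemmas \ref{lem4}, \ref{lem5}, \ref{lem6} jointly cover the three nontrivial orders; were $p(p+2)$ to admit further divisors the argument would leave gaps. I would therefore double-check that the coprimality $\gcd(2^p-1,2^{p+2}-1)=1$ indeed makes the cofactor an integer so that Lemma \ref{lem6} applies verbatim.
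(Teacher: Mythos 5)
Your proposal is correct and follows essentially the same route as the paper: the identical congruence computation reducing $-2S_B(2)T_B(2^{-1})$ to $c$ modulo $2^{N_B}-1$, followed by the same three Lemmas \ref{lem4}, \ref{lem5}, \ref{lem6} to rule out every prime divisor of $\gcd(c,2^{N_B}-1)$. The only cosmetic difference is in the final assembly: you run a case analysis on $\ord_q(2)$ for a hypothetical prime divisor $q$, while the paper invokes the bound $\gcd(c,2^{N_B}-1)\le\gcd(c,2^p-1)\,\gcd(c,2^{p+2}-1)\,\gcd\bigl(c,\tfrac{2^{p(p+2)}-1}{(2^p-1)(2^{p+2}-1)}\bigr)$; both rest on the same factorization of $2^{N_B}-1$ and are equivalent in substance.
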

\begin{proof}
From Lemmas \ref{lem2} and \ref{lem3} we have
\begin{align}
-2S_B(2)T_B(2^{-1})&\equiv N_B+\sum_{\tau=1}^{N_B-1}C_{s_B}(\tau)2^{\tau}\notag\\
&=p(p+2)+\sum_{\tau=1}^{N_B-1}3\cdot2^{\tau}-4\sum_{i=1}^{p-1}2^{(p+2)i}\notag\\
                                              &=p(p+2)+3\big(\frac{1-2^{p(p+2)}}{1-2}-1\big)-4\big(\frac{1-2^{p(p+2)}}{1-2^{p+2}}-1\big)\notag\\
                                              &\equiv p(p+2)-4\frac{2^{p(p+2)}-1}{2^{p+2}-1}+1\pmod{2^{N_B}-1}.\label{equ2}
\end{align}
Hence by (2) we get $$\gcd(S_B(2), 2^{N_B}-1)\leq \gcd(c, 2^p-1)\gcd(c, 2^{p+2}-1)\gcd(c, \frac{2^{p(p+2)}-1}{(2^p-1)(2^{p+2}-1)}).$$
Then the result follows from Lemmas 5, 6, 7.
\end{proof}

Finally, we give an example to illustrate Theorem 8.
\begin{example}
For two primes  $p=3$ and $p+2=5$, according to the definition of the modified two-prime sequence, we get $$s_B=(1 0 0 1 0 1 1 1 0 1 1 1 1 1 1)$$with period $N_B=p(p+2)=15$.

Then $s_B(2)=1\cdot2^0+0\cdot2^1+\cdots+1\cdot2^{14}=32489=53\cdot613$.

For $2^{N_B}-1=2^{15}-1=32767=7\cdot31\cdot151, \Phi_2(s_B)=\log_2\frac{2^{N_B}-1}{\gcd(2^{N_B}-1, S(2))}=\log_2\frac{7\cdot31\cdot151}{\gcd(7\cdot31\cdot151,53\cdot613)}=\log_2(2^{15}-1).$ The results are consistent
with the Theorem 8.
\end{example}

\end{document}